\DeclareMathAlphabet{\pazocal}{OMS}{zplm}{m}{n}
\definecolor{brickred}{rgb}{0.8, 0.25, 0.33}
\newcommand\myshade{85}
\colorlet{mylinkcolor}{BrickRed}
\colorlet{mycitecolor}{NavyBlue}
\colorlet{myurlcolor}{Aquamarine}
\let\emptyset\varnothing
\newcommand{\uu}[2]{u_{#1}^{#2}}
\begin{document}

\title{Characterising high-order interdependence via entropic conjugation}

\author{Fernando E. Rosas}
\email{f.rosas@sussex.ac.uk}
\affiliation{Sussex AI and Sussex Centre for Consciousness Science, Department of Informatics, University of Sussex}
\affiliation{Centre for Complexity Science and Center for Psychedelic Research, Department of Brain Science, Imperial College London}
\affiliation{Center for Eudaimonia and Human Flourishing, University of Oxford}
\affiliation{Principles of Intelligent Behavior in Biological and Social Systems (PIBBSS)}

\author{Aaron Gutknecht}
\affiliation{Campus Institute for Dynamics of Biological Networks, Georg-August University Göttingen}

\author{Pedro A. M. Mediano}
\affiliation{Department of Computing, Imperial College London}
\affiliation{Division of Psychology and Language Sciences, University College London}

\author{Michael Gastpar}
\affiliation{School of Computer and Communication Sciences, École Polytechnique Fédérale de Lausanne}

\newtheorem{definition}{Definition}
\newtheorem{conjecture}{Conjecture}
\newtheorem{theorem}{Theorem}
\newtheorem{lemma}{Lemma}
\newtheorem{proposition}{Proposition}
\newtheorem{corollary}{Corollary}
\newtheorem{example}{Example}
\newtheorem{remark}{Remark}

\begin{abstract}

\noindent
High-order phenomena play crucial roles in many systems of interest, but their analysis is often highly nontrivial. 
There is a rich literature providing a number of alternative information-theoretic quantities capturing high-order phenomena, but their interpretation and relationship with each other is not well understood. 
The lack of principles unifying these quantities obscures the choice of tools for enabling specific type of analyses. 
Here we show how an \emph{entropic conjugation} provides a theoretically grounded principle to investigate the space of possible high-order quantities, clarifying the nature of the existent metrics while revealing gaps in the literature. This leads to identify novel notions of symmetry and skew-symmetry as key properties for guaranteeing a balanced account of high-order interdependencies and enabling broadly applicable analyses across physical systems.

\end{abstract}

\maketitle

Physical and biological systems often exhibit relationships between their parts that cannot be reduced to dependencies in subsets of them~\cite{battiston2022higher}. 
The study of these high-order interdependencies has lead to new insights in a wide range of physical systems~\cite{battiston2020networks,battiston2021physics,santoro2023higher}, and also in studies 
involving genetics~\cite{cang2020inferring,park2021higher} and neural systems (both biological~\cite{gatica2021high,luppi2022synergistic,herzog2022genuine,varley2023information,varley2023multivariate} and artificial~\cite{tax2017partial,proca2022synergistic,kaplanis2023learning}), to name a few. Overall, qualitatively different types of interdependence have been found to play complementary roles balancing needs for robustness and flexibility~\cite{varley2024evolving,luppi2024information}.

There are different approaches to quantify high-order phenomena~\cite{rosas2022disentangling}, among which we focus on information-theoretic metrics based on Shannon entropy. 
While there is a rich literature offering such metrics, their interpretation is highly non-trivial --- 
being unclear if these quantities are capturing the same effects or instead provide complementary perspectives. This lack of clarity makes it challenging for researchers to choose the right tools to carry out specific types of analyses, severely hindering the study of high-order phenomena.

Here we address this issue by introducing the notion of \emph{entropic conjugation}, which establishes a theoretically grounded principle to explore the space of possible high-order quantities. Our results show that the existent high-order metrics have a closer relationship than previously thought, while revealing gaps in the literature for characterising interactions involving more than 5 variables. Moreover, the notions of symmetry and skew-symmetry with respect to conjugation emerge as key guarantees for providing a balanced account of high-order interdependence, enabling analyses that can illuminate the high-order profile of a wide range of physical systems. 
The proofs of our results can be found in the Appendix.

\vspace{.4cm}
\paragraph{Measures of multivariate interdependence.}

Let's consider a system with a state is specified by the random vector $\bm X=(X_1,\dots,X_n)$ following a joint distribution $p_{\bm X}$ and marginal distributions $p_{X_i}$. The literature presents various metrics to assess the dependencies between parts of $\bm X$; here we focus on linear combinations of entropies of the form $\phi(\bm X)=\sum_{\bm a\subseteq I_n} \lambda_{\bm a}H(\bm X^{\bm a})$, with $I_n=\{1,\dots,n\}$, $\bm X^{\bm a}$ is a vector of variables whose indices are in $\bm a\subseteq I_n$, $H$ is Shannon's entropy, and $\lambda_{\bm a}$ are scalars. 
We require these metrics to satisfy two key properties:
\begin{itemize}
    \item[(i)] \textit{Labelling-symmetry}: $\phi(\bm X)$ is invariant to permutations among $X_1,\dots,X_n$.
    \item[(ii)] \textit{Dependency}: $\phi(\bm X) = 0$ if the variables are jointly independent (i.e. $p_{\bm X} = \prod_{k=1}^n p_{X_k}$).
\end{itemize}
Hence, property (i) 
guarantees that $\phi$ does not depend on how variables are named and (ii) 
that it only captures interactions effects between variables~\footnote{There are multiple `directed' high-order quantities that distinguish between predictor and target variables, which don't satisfy labelling-symmetry~\cite{balduzzi2008integrated,timme2014synergy,rosas2024characterising}. These quantities will be studied in a follow-up work using a formalism that extends the one presented here.}.

There are several well-known metrics that satisfy these properties. The oldest of these is the \textit{interaction information}~\cite{mcgill1954multivariate}, which is defined as
\begin{align}
\text{II}(\bm X) 
\coloneqq&  \sum_{k=1}^n (-1)^{k+1} \sum_{|\bm a| = k} H(\bm{X}^{\bm a}).\label{eq:def_II}
\end{align}
Other metrics of interdependence are the \emph{total correlation} (TC)~\cite{watanabe1960information} and the \emph{dual total correlation} (DTC)~\cite{sun1975linear}, which are given by
\begin{align}
    \text{TC}(\bm X) \coloneqq& \sum_{j=1}^n H(X_j) - H(\bm X)\quad\text{and} \label{eq:def_TC}\\
    \text{DTC}(\bm X) \coloneqq& \: H(\bm X) -\sum_{j=1}^n H(X_j|\bm X^{-j}).\label{eq:def_DTC}
\end{align}
Another such metric, well-known in computational neuroscience, is the Tononi-Sporns-Edelman (TSE) complexity~\cite{tononi1994measure}, which is defined as
\begin{equation}
    \text{TSE}(\bm X) 
    := \sum_{k=1}^{\lfloor n/2\rfloor} {n\choose k}^{-1} \sum_{|\bm a|=k} I(\bm X^{\bm a}; \bm X^{-\bm a}),
\end{equation}
where $-\bm a$ is the set of indices of $\bm X$ that are not in $\bm a$.
Finally, we also consider the more recently introduced O-information and S-information~\cite{rosas2019quantifying}:
\begin{align}
    \Omega(\bm X) 
    :=& \,(n-2) H(\bm X) + \sum_{j=1}^n \Big( H(X_j) - H(\bm X^{-j}) \Big),\\
    \Sigma(\bm X)
    :=& \,\sum_{j=1}^n I(X_j;\bm X^{-j}).
\end{align}
Of these, TC, DTC, TSE, and $\Sigma$ are non-negative, while II and $\Omega$ can take positive and negative values. We will show these seemingly unrelated metrics can be parsimoniously unified under the concept of entropic conjugation.

\vspace{.5cm}
\paragraph{Characterising high-order interdependencies.}

Let us introduce the following average quantities:
\begin{equation}\label{eq:u_k}
      u_k(\bm X) :=  \frac{1}{{n\choose k+1} {k+1\choose 2}}
      \sum_{\substack{i,j\in I_n \\i<j}} \sum_{\substack{|\bm a|=k-1\\i,j\notin\bm a}} I(X_i;X_j | \bm X^{\bm a}) ,
\end{equation}
with $k=1,\dots,n-1$. 
These quantities satisfy labelling-symmetry and dependency, and capture the interdependencies between $k$ variables --- i.e. $u_j(\bm X)=0$ for $j<k$ if and only if all subsets of $k$ variables or less are statistically independent. Furthermore, it has been shown that all information-theoretic metrics $\phi$ satisfying labelling-symmetry and dependency can be expressed as
$\phi(\bm X) = \sum_{k=1}^{n-1} c_k u_{k}(\bm X)$, where $c_k\in\mathbb{R}$ captures the relevance of $(k+1)$-th order dependencies on $\phi$~\cite{te1978nonnegative}. 
Moreover, this decomposition is unique in guaranteeing that $\phi$ is non-negative if and only if $c_k\geq0$ for $k=1,\dots,n\!-\!1$.

A complementary, more fine-grained way of investigating high-order interdependence is enabled by partial information decomposition (PID), which addresses how information about a variable $Y$ provided by $\bm X$ may be decomposed into the contributions of its different components~\cite{williams2010nonnegative,wibral2017partial,mediano2021towards}. 
PID reveals that while pairwise interdependence is quantified by its strength (measured e.g. by the mutual information), higher-order relationships can be of qualitatively different kinds --- most notably redundant (multiple variables sharing the same information) or synergistic (a set of variables holding some information that cannot be seen from any subset). Moreover, PID recognises that synergy and redundancy can be mixed in non-trivial ways, and explores this thoroughly via an algebraic construction that leads to the decomposition
\begin{equation}
\label{eq:PID}
    I(\bm X;Y) = \sum_{\bm\alpha\in\mathcal{A}_n} I_\partial^{\bm\alpha}(\bm X;Y),
\end{equation}
where $\mathcal{A}_n$ is a collection of elements $\bm\alpha$ that cover all possible combinations of redundancy and synergy (App.~\ref{app:conjugation_PID}). For example, if $n=2$ then $\mathcal{A}_2$ has four elements: $\alpha_1\!=\!\{\{1\}\{2\}\}$ corresponding to the redundancy between $X_1$ and $X_2$, $\alpha_2\!=\!\{\{1,2\}\}$ corresponding to the synergy between them, and $\alpha_3\!=\!\{\{1\}\}$ and $\alpha_4\!=\!\{\{2\}\}$ corresponding to unique information in one but not the other.

\vspace{.5cm}
\paragraph{A conjugation of Shannon quantities.}

We are now ready to introduce the notion of entropy conjugation.

\begin{definition}\label{def:conj}
The \emph{entropic conjugation} is defined by
\begin{align}
\big(H(\bm X^{\bm a})\big)^* 
:=& H(\bm X^{-\bm a}) - H(\bm X).
\end{align}    
The conjugation of a linear combination of entropies $\phi(\bm X)=\sum_{\bm a\subseteq I_n} \lambda_{\bm a}H(\bm X^{\bm a})$ is
\begin{align}
\big(\phi(\bm X)\big)^* 
:=& \sum_{\bm a\subseteq I_n} \lambda_{\bm a}\big(H(\bm X^{\bm a})\big)^* .
\end{align}    
\end{definition}

It can be seen that $^*$ is a proper conjugation, as it is linear (by definition) and an involution, as $((H)^*)^* = H$. Also, a direct calculation shows that entropic conjugation acts on the mutual information as follows:
\begin{equation}
\big(I(\bm X^{\bm a};\bm X^{\bm b} | \bm X^{\bm c}) \big)^* 
= I\big(\bm X^{\bm a};\bm X^{\bm b}|\bm X^{-(\bm a\cup\bm b\cup\bm c)}\big),
\end{equation}
where $\bm a,\bm b,\bm c$ are disjoint subsets of indices. 
Furthermore, 
one can show that entropic conjugation exchanges high- for low-order interdependencies, which will be the basis of our analysis of high-order quantities in the next section.
\begin{proposition}\label{prop:u_conj}
    $\big(u_k(\bm X)\big)^* = u_{n-k}(\bm X).$
\end{proposition}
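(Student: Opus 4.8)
The plan is to combine the linearity of $^*$ with the already-established action of conjugation on conditional mutual information, so that the statement reduces to a purely combinatorial matching of terms. Since $u_k(\bm X)$ is, up to normalisation, a sum of conditional mutual informations $I(X_i;X_j\mid\bm X^{\bm a})$ with $|\bm a|=k-1$ and $i,j\notin\bm a$, I would apply $^*$ term by term. Using the identity $\big(I(X_i;X_j\mid\bm X^{\bm a})\big)^*=I\big(X_i;X_j\mid\bm X^{-(\{i,j\}\cup\bm a)}\big)$ (the given mutual-information rule with singleton arguments $\{i\}$ and $\{j\}$, which are disjoint from $\bm a$ since $i,j\notin\bm a$), each summand becomes a conditional mutual information whose new conditioning set is $\bm c:=I_n\setminus(\{i,j\}\cup\bm a)$.

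Next I would track the size of the new conditioning set: as $|\bm a|=k-1$ and $\{i,j\}$ contributes two further indices, $|\bm c|=n-2-(k-1)=(n-k)-1$. This is exactly the conditioning-set size appearing in $u_{n-k}(\bm X)$, so the conjugated summands already have the correct form. The crucial step is then to show the double sum is preserved. For each fixed pair $(i,j)$, write $T:=I_n\setminus\{i,j\}$, a set of $n-2$ elements; the map $\bm a\mapsto\bm c=T\setminus\bm a$ is the complement within $T$ and is therefore a bijection from the $(k-1)$-subsets of $T$ onto its $(n-k-1)$-subsets. Hence summing the conjugated terms over all admissible $\bm a$ reproduces precisely the sum over all admissible $\bm c$ appearing in $u_{n-k}$, with the pair index $(i,j)$ (and the ordering $i<j$) untouched.

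Finally I would verify that the normalising prefactors coincide. The prefactor of $u_k$ is $1/N_k$ with $N_k={n\choose k+1}{k+1\choose 2}={n\choose 2}{n-2\choose k-1}$, i.e. the total number of terms in the double sum. The binomial symmetry ${n-2\choose k-1}={n-2\choose n-k-1}$ gives $N_k=N_{n-k}$, so the prefactor of $u_k$ equals that of $u_{n-k}$. Combining the term-by-term conjugation, the complement bijection, and the prefactor equality yields $\big(u_k(\bm X)\big)^*=u_{n-k}(\bm X)$. I expect the only delicate point to be the combinatorial bookkeeping of the bijection and the prefactor identity; the entire information-theoretic content is supplied by the conjugation rule for conditional mutual information, so no further Shannon-theoretic manipulation is needed.
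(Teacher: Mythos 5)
Your proof is correct, but it takes a genuinely different route from the paper's. The paper works with the averaged entropies $r_k(\bm X) = \binom{n}{k}^{-1}\sum_{|\bm a|=k} H(\bm X^{\bm a})$ and the identity $u_k = 2r_k - r_{k+1} - r_{k-1}$; since $(r_k)^* = r_{n-k} - r_n$ and the $r_n$ corrections cancel in the second difference, the conjugation $(u_k)^* = u_{n-k}$ falls out in two lines. You instead conjugate $u_k$ term by term directly from its definition as an average of conditional mutual informations, invoking the rule $\big(I(\bm X^{\bm a};\bm X^{\bm b}\mid \bm X^{\bm c})\big)^* = I\big(\bm X^{\bm a};\bm X^{\bm b}\mid \bm X^{-(\bm a\cup\bm b\cup\bm c)}\big)$ stated in the main text, and then match terms via the complementation bijection $\bm a \mapsto (I_n\setminus\{i,j\})\setminus \bm a$ together with the prefactor identity $\binom{n}{k+1}\binom{k+1}{2} = \binom{n}{2}\binom{n-2}{k-1} = \binom{n}{2}\binom{n-2}{n-k-1}$. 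Both arguments lean on one ``direct calculation'' asserted elsewhere in the paper (yours on the conditional-mutual-information rule, the paper's on the $u_k$--$r_k$ identity), so neither is more self-contained in an absolute sense. What your route buys is transparency: it exhibits an explicit bijection between the individual terms of $u_k$ and those of $u_{n-k}$, with the pair $(i,j)$ fixed and only the conditioning set complemented, which makes the ``conjugation exchanges high- for low-order conditioning'' interpretation visible at the level of single terms. What the paper's route buys is economy and reuse: the $r_k$ representation is the workhorse for several other proofs (the decompositions of Proposition~\ref{prop:decompositions} and the efficiency result of Proposition~\ref{prop:efficiency}), so establishing it once amortises the cost. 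Your combinatorial bookkeeping is all verified correctly, so there is no gap.
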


\noindent

A deeper insight on the effect of conjugation can be attained by looking at it via the PID framework. 
Our next result shows that entropic conjugation is the unique operation that arises from applying the duality principle from order theory~\cite{davey2002introduction} to PID, which results in a natural conjugation of atoms $\dag$ that exchanges redundancies for synergies and vice-versa (for example, if $\bm\alpha=\{\{1,2\}\}$ then 
$\bm\alpha^\dag=\{\{1\},\{2\}\}$). Crucially, this holds for any operationalisation of synergy and redundancy that is consistent with the PID framework (see App.~\ref{app:conjugation_PID}).

\begin{theorem}\label{prop:PID}
The natural conjugation of PID atoms $\dagger$ arising from order duality satisfies
\begin{equation}
I(\bm X^{\bm a};Y|\bm X^{\bm b})^\dagger 
= \!\!\!\sum_{\bm\alpha\in\mathcal{A}_{\bm a}^{\bm b}} \!\! I_\partial^{\bm\alpha^\dag}\!(\bm X;Y)
= I(\bm X^{\bm a};Y|\bm X^{\bm b})^*,
\end{equation}
where $\mathcal{A}_{\bm a}^{\bm b}$ is a suitable collection of atoms (see Lemma~\ref{prop:cmi_atoms_boolean}). 
\end{theorem}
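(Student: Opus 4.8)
The plan is to reduce the claimed chain of identities to a purely combinatorial statement about how the atom conjugation $\dagger$ permutes the atom collections $\mathcal{A}_{\bm a}^{\bm b}$, and then to match this against the entropic conjugation $*$ evaluated directly on conditional mutual informations. The backbone is Lemma~\ref{prop:cmi_atoms_boolean}, which I read as expressing every conditional mutual information as a Boolean sum of atoms,
\begin{equation}
I(\bm X^{\bm a};Y\,|\,\bm X^{\bm b}) = \sum_{\bm\alpha\in\mathcal{A}_{\bm a}^{\bm b}} I_\partial^{\bm\alpha}(\bm X;Y),
\qquad
\mathcal{A}_{\bm a}^{\bm b} = \big\{\bm\alpha : f_{\bm\alpha}(\bm a\cup\bm b)=1,\; f_{\bm\alpha}(\bm b)=0\big\},
\end{equation}
where $f_{\bm\alpha}:2^{I_n}\to\{0,1\}$ is the monotone parthood distribution attached to the atom $\bm\alpha$ (with $f_{\bm\alpha}(\bm s)=1$ iff some element of $\bm\alpha$ is contained in $\bm s$).

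Granting this, the first equality is essentially definitional: since $\dagger$ is the linear map sending each atom $I_\partial^{\bm\alpha}$ to $I_\partial^{\bm\alpha^\dag}$, applying it term-by-term to the expansion above immediately yields $I(\bm X^{\bm a};Y|\bm X^{\bm b})^\dagger=\sum_{\bm\alpha\in\mathcal{A}_{\bm a}^{\bm b}} I_\partial^{\bm\alpha^\dag}(\bm X;Y)$ with no further work, so all the content lies in the second equality. For that I would first evaluate the right-hand side through the entropic conjugation. Treating the target $Y$ as a distinguished variable that is never removed by index complementation, the same computation that produces the mutual-information rule stated before Proposition~\ref{prop:u_conj} gives
\begin{equation}
\big(I(\bm X^{\bm a};Y\,|\,\bm X^{\bm b})\big)^*
= I\big(\bm X^{\bm a};Y\,|\,\bm X^{-(\bm a\cup\bm b)}\big).
\end{equation}
Expanding this conjugated quantity again via Lemma~\ref{prop:cmi_atoms_boolean}, the second equality reduces to the set identity $\big\{\bm\alpha^\dag : \bm\alpha\in\mathcal{A}_{\bm a}^{\bm b}\big\} = \mathcal{A}_{\bm a}^{-(\bm a\cup\bm b)}$.

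The crux, and the step I expect to be the main obstacle, is verifying this identity, which is where the order-duality definition of $\dagger$ enters. Identifying $\dagger$ with the canonical De Morgan anti-automorphism of the redundancy lattice, its atoms satisfy $f_{\bm\alpha^\dag}(\bm s)=1-f_{\bm\alpha}(I_n\setminus\bm s)$. Using $\bm a\cap\bm b=\emptyset$, so that $\bm a\cup(-(\bm a\cup\bm b))=-\bm b$, a direct substitution gives $f_{\bm\alpha^\dag}(-\bm b)=1-f_{\bm\alpha}(\bm b)$ and $f_{\bm\alpha^\dag}(-(\bm a\cup\bm b))=1-f_{\bm\alpha}(\bm a\cup\bm b)$. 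Hence the conditions defining $\bm\alpha^\dag\in\mathcal{A}_{\bm a}^{-(\bm a\cup\bm b)}$, namely $f_{\bm\alpha^\dag}(-\bm b)=1$ and $f_{\bm\alpha^\dag}(-(\bm a\cup\bm b))=0$, are equivalent to $f_{\bm\alpha}(\bm b)=0$ and $f_{\bm\alpha}(\bm a\cup\bm b)=1$, which are exactly the conditions defining $\bm\alpha\in\mathcal{A}_{\bm a}^{\bm b}$. Since $\dagger$ is an involution, this equivalence establishes the set identity and hence the second equality.

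Two points would need care. First, I must check that the extension of $*$ to quantities involving the target $Y$ is well posed and yields the complemented-conditioning rule above; the clean route is to adjoin $Y$ as an index held fixed under complementation and to verify the rule on entropies before recombining them. Second, I should confirm that order duality pins down $\dagger$ unambiguously: the redundancy lattice is self-dual under the De Morgan involution, which supplies the unique labelling-symmetric anti-automorphism meant by order duality and which underlies the parthood-complement formula used above.
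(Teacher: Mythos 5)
Your proposal is correct and follows essentially the same route as the paper's proof: both expand the conditional mutual information via Lemma~\ref{prop:cmi_atoms_boolean}, apply $\dagger$ atom-by-atom using Definition~\ref{def:pid_conjugation}, and reindex the sum through the Boolean-complement characterisation $f^\dagger(\bm s)=1 \Leftrightarrow f(\bm s^C)=0$ of Proposition~\ref{prop:pid_inv}, with the disjointness identity $\bm a \cup (\bm a\cup\bm b)^C = \bm b^C$ doing the same work in both arguments. Your explicit care in extending $*$ to quantities involving the target $Y$ (holding $Y$ fixed under complementation) is a point the paper handles only implicitly, but it does not change the substance of the proof.
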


Let's illustrate this result with a simple example. Using the fact that the O-information is equal to redundancy minus synergy~\cite{rosas2019quantifying}, Th.~\ref{prop:PID} implies that
\begin{align}
    \big( \Omega(X_1;X_2;Y)\big)^* \!
    &= I_\partial^{\{\{1\},\{2\}\}^\dag}\!(\bm X;Y) - I_\partial^{\{\{1,2\}\}^\dag}\!(\bm X;Y) \nonumber\\
    &= I_\partial^{\{\{1,2\}\}}(\bm X;Y)
    - I_\partial^{\{\{1\},\{2\}\}}(\bm X;Y) \nonumber\\
    &= - \Omega(X_1;X_2;Y).
\end{align}

\vspace{.5cm}
\paragraph{Symmetric and skew-symmetric metrics.}

We now use the entropic conjugation to introduce the notions of \emph{symmetric and skew-symmetric} interdepence quantities.
 
\begin{definition}\label{def:sym_skew}
    A linear combination of entropies $\phi$ is \emph{symmetric} if $(\phi)^*=\phi$ and \emph{skew-symmetric} if $(\phi)^*=-\phi$.
\end{definition}

This definition, combined with Prop.~\ref{prop:u_conj} and Th.~\ref{prop:PID}, implies that symmetric and skew-symmetric quantities provide balanced accounts of low- and high-order interdependencies (alternatively, redundancies and synergies): symmetric quantities weight these equally, while skew-symmetric weights them equally but with opposite signs. Thus, a practical way to recognise symmetric and skew-symmetric high-order metrics is via their weights in terms of the basis $u_k$, as shown next.
\begin{lemma}\label{lemma:char}
    If $\phi(\bm X)=\sum_{k=1}^{} c_k u_k(\bm X)$, then 
    \begin{itemize}
        \item $\phi$ is symmetric $\iff c_k=c_{n-k}$.
        \item $\phi$ is skew-symmetric $\iff c_k=-c_{n-k}$.
    \end{itemize}
\end{lemma}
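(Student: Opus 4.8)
The plan is to reduce the claim to the uniqueness of the coefficients in the $u_k$ basis, exploiting that conjugation is linear and that Proposition~\ref{prop:u_conj} tells us exactly how it permutes the basis elements. First I would apply linearity of $^*$ to the decomposition $\phi=\sum_{k=1}^{n-1}c_k u_k$, obtaining
\[
(\phi)^* = \sum_{k=1}^{n-1} c_k\,(u_k)^* = \sum_{k=1}^{n-1} c_k\, u_{n-k},
\]
where the second equality is precisely Proposition~\ref{prop:u_conj}. Re-indexing the sum via $j=n-k$ (which ranges over $1,\dots,n-1$ exactly as $k$ does) then yields
\[
(\phi)^* = \sum_{j=1}^{n-1} c_{n-j}\, u_j .
\]

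The second step is to compare this expression with $\phi=\sum_{j=1}^{n-1} c_j u_j$ for the symmetric case, and with $-\phi=-\sum_{j=1}^{n-1} c_j u_j$ for the skew-symmetric case. To turn the equality of two such linear combinations into an equality of coefficients, I would invoke the uniqueness of the $u_k$-decomposition quoted earlier in the text (following \cite{te1978nonnegative}): since every labelling-symmetric, dependency-respecting linear combination of entropies admits a \emph{unique} representation in terms of $u_1,\dots,u_{n-1}$, these functionals are linearly independent. Matching coefficients term by term then gives $(\phi)^*=\phi \iff c_{n-j}=c_{j}$ for all $j$ and $(\phi)^*=-\phi \iff c_{n-j}=-c_{j}$ for all $j$, which are exactly the two claimed equivalences (relabelling $j$ as $k$). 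Both the forward and backward implications come for free once coefficient-matching is justified, since the reindexing is a bijection on $\{1,\dots,n-1\}$.

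I expect the only real subtlety to be justifying that equality of the two linear combinations forces equality of their coefficients, i.e.\ the linear independence of $\{u_k\}_{k=1}^{n-1}$. This is not immediate from the definition of the $u_k$ alone, but it is precisely what the uniqueness of the decomposition of \cite{te1978nonnegative} guarantees, so I would cite that result rather than re-derive it. Everything else is a one-line application of linearity of $^*$ followed by a relabelling of the summation index, so the proof should be short.
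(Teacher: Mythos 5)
Your proof is correct and takes essentially the same route the paper intends: the paper's own proof is a one-line appeal to Definitions~\ref{def:conj} and \ref{def:sym_skew} (with Prop.~\ref{prop:u_conj} doing the real work), and you have simply made the details explicit---linearity of $^*$, the reindexing $j=n-k$, and the linear independence of the $u_k$ needed to justify coefficient matching. The last point is a genuine gap in the paper's terse proof that you correctly identified and patched by citing the uniqueness of the $u_k$-decomposition.
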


With these tools at hand, we now study the existent high-order metrics under the lens of conjugation. 

\begin{proposition}\label{prop:decompositions}
The mentioned multivariate metrics can be decomposed as follows:
\begin{align}
    \normalfont{\text{TC}}(\bm X) 
    &= \sum_{k=1}^{n-1} (n-k)\uu{k}{n}(\bm X),\label{eq:TC_dec}\\
    \normalfont{\text{DTC}}(\bm X)
    &= \sum_{k=1}^{n-1} k\uu{k}{n}(\bm X),\label{eq:DTC_dec} \\
    \normalfont{\text{TSE}}(\bm X) 
    &= \sum_{k=1}^{n-1} \frac{k (n - k)}{2} u_k^n(\bm X),\\
    \Sigma(\bm X) 
    &= n \sum_{k=1}^{n-1} u_k^n(\bm X),\\
    \Omega(\bm X) 
    &= \sum_{k=1}^{n-1} (n-2k) u_k^n(\bm X),\\
    \normalfont{\text{II}}(\bm X)
    &= \sum_{k=1}^{n-1} (-1)^{k+1} {n-2\choose k-1} \uu{k}{n}(\bm X).
\end{align}
Therefore, the following relationships hold:
\begin{align}
\big(\normalfont{\text{TC}}(\bm X)\big)^* 
&= \normalfont{\text{DTC}}(\bm X),\\
\big(\Sigma(\bm X)\big)^* 
&= \Sigma(\bm X),\\
\big(\normalfont{\text{TSE}}(\bm X)\big)^* 
&= \normalfont{\text{TSE}}(\bm X),\\
\big(\Omega(\bm X)\big)^* 
&= -\Omega(\bm X),\\
\big(\normalfont{\text{II}}(\bm X)\big)^* 
&= (-1)^n \normalfont{\text{II}}(\bm X).   
\end{align}
\end{proposition}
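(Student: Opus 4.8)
The plan is to expand everything in the basis of \emph{symmetrised entropies} $S_m := \sum_{|\bm a|=m} H(\bm X^{\bm a})$, $m=0,\dots,n$ (with $S_0=0$). Every labelling-symmetric entropy combination is a linear combination of the $S_m$, and because the $H(\bm X^{\bm a})$ are linearly independent as functionals the $S_m$ are too; hence two such combinations agree iff their $S_m$-coefficients agree, and each stated identity becomes a finite coefficient check. The one structural ingredient I need is the expansion of $u_k$ in this basis. Writing each summand of \eqref{eq:u_k} as $I(X_i;X_j|\bm X^{\bm a}) = H(\bm X^{\bm a\cup\{i\}})+H(\bm X^{\bm a\cup\{j\}})-H(\bm X^{\bm a\cup\{i,j\}})-H(\bm X^{\bm a})$ and counting how many triples $(i,j,\bm a)$ yield each subset, the double sum collapses to a tridiagonal combination of $S_{k-1},S_k,S_{k+1}$. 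After dividing by $\binom{n}{k+1}\binom{k+1}{2}$, its three coefficients simplify to $\{-1,2,-1\}\cdot\binom{n}{m}^{-1}$, so that for $\phi=\sum_{k=1}^{n-1}c_k u_k$ the coefficient of $S_m$ is the scaled discrete Laplacian
\begin{equation}
\mu_m = \binom{n}{m}^{-1}\big(2c_m-c_{m-1}-c_{m+1}\big),\qquad m=1,\dots,n,
\end{equation}
with the boundary convention $c_0=c_n=c_{n+1}=0$. I expect this identity to be the main obstacle: both the counting and, more delicately, the simplification of its coefficients (e.g. $m(n-m)/[\binom{n}{m+1}\binom{m+1}{2}]=2\binom{n}{m}^{-1}$) take work, but once it is in place every decomposition reduces to a one-line second difference.

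The first four metrics then follow immediately. Their direct $S_m$-representations are $\text{TC}=S_1-S_n$, $\text{DTC}=S_{n-1}-(n-1)S_n$, $\Sigma=S_1+S_{n-1}-nS_n$ and $\Omega=S_1-S_{n-1}+(n-2)S_n$. The proposed coefficients $c_k=n-k,\,k,\,n,\,n-2k$ are all affine in $k$, so their second differences vanish in the interior and only $\mu_1,\mu_{n-1},\mu_n$ are nonzero; evaluating these three reproduces each representation. (Alternatively, one may prove $\text{TC}=\sum_k(n-k)u_k$ alone, obtain $\text{DTC}$ from Proposition~\ref{prop:u_conj} via $k\mapsto n-k$, and use $\Sigma=\text{TC}+\text{DTC}$, $\Omega=\text{TC}-\text{DTC}$.)

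The two remaining metrics are then short. For II, with $c_k=(-1)^{k+1}\binom{n-2}{k-1}$, the second difference factors as $2c_m-c_{m-1}-c_{m+1}=(-1)^{m+1}\big[\binom{n-2}{m-2}+2\binom{n-2}{m-1}+\binom{n-2}{m}\big]=(-1)^{m+1}\binom{n}{m}$ by Pascal's rule applied twice, so $\mu_m=(-1)^{m+1}$, matching $\text{II}=\sum_{m=1}^n(-1)^{m+1}S_m$. For TSE, expanding $I(\bm X^{\bm a};\bm X^{-\bm a})=H(\bm X^{\bm a})+H(\bm X^{-\bm a})-H(\bm X)$ shows the interior target coefficient is $\binom{n}{m}^{-1}$, and since $2c_m-c_{m-1}-c_{m+1}=-\Delta^2\!\big[\tfrac{k(n-k)}{2}\big]=1$ the proposed $c_k=\tfrac{k(n-k)}{2}$ is exactly right; the one point needing care is the balanced bipartition at $m=n/2$ for even $n$, which must be counted once so that the symmetric coefficients hold.

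Finally, the conjugation relations follow from the decompositions through Proposition~\ref{prop:u_conj} and Lemma~\ref{lemma:char} by comparing $c_k$ with $c_{n-k}$. For TC one has $c_{n-k}=k$, which are exactly the DTC coefficients, so $(\text{TC})^*=\text{DTC}$. The constant $c_k=n$ and the symmetric $c_k=\tfrac{k(n-k)}{2}$ are invariant under $k\mapsto n-k$, so $\Sigma$ and TSE are symmetric and equal their own conjugates. The affine $c_k=n-2k$ satisfies $c_{n-k}=-c_k$, making $\Omega$ skew-symmetric, i.e. $(\Omega)^*=-\Omega$. And $c_k=(-1)^{k+1}\binom{n-2}{k-1}$ satisfies $c_{n-k}=(-1)^n c_k$ (using $\binom{n-2}{n-k-1}=\binom{n-2}{k-1}$), so II is symmetric for even $n$ and skew-symmetric for odd $n$; in both cases Definition~\ref{def:sym_skew} yields $(\text{II})^*=(-1)^n\text{II}$.
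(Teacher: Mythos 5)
Your proof is correct and follows essentially the same route as the paper's: your symmetrised entropies $S_m$ are just $\binom{n}{m}\,r_m$ for the paper's averaged entropies, your tridiagonal expansion $\mu_m=\binom{n}{m}^{-1}\bigl(2c_m-c_{m-1}-c_{m+1}\bigr)$ is exactly the paper's representation $u_k=2r_k-r_{k+1}-r_{k-1}$, and the conjugation relations are obtained from Proposition~\ref{prop:u_conj} and Lemma~\ref{lemma:char} precisely as the paper does. Your caveat about the balanced bipartition for even $n$ is well placed and is glossed over by the paper's ``direct verification'': under the literal definition of TSE given in the text (full weight on the $k=n/2$ term) the left-hand side acquires an extra $r_{n/2}$ relative to $\sum_k \tfrac{k(n-k)}{2}u_k$ (e.g.\ for $n=2$ it gives $I(X_1;X_2)$ versus $\tfrac12 I(X_1;X_2)$), so the stated decomposition holds under the original Tononi--Sporns--Edelman convention in which each bipartition is counted once.
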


These results show that the S-information and TSE complexity are balanced metrics of overall interdependence strength, while the O-information provides a balanced opposition between high- and low-order interdependencies. In contrast, the interaction information alternates between being symmetric or skew-symmetric in a way that will be better understood in the next subsection. Additionally, this result also shows that the TC and DTC are not balanced metrics, being duals to each other: the TC provides more weight to low-order effects while the DTC to high-order ones.

These results also reveal that this collection of metrics is not as arbitrary as it may seem: when seen from their coefficients $c_k$, they cover the constant (S-information), linear (TC, DTC, and O-information), quadratic (TSE), and binomial (interaction information) cases.

\vspace{.5cm}
\paragraph{Spanning the possible metrics.}
We now show that entropic conjugation induces a decomposition of high-order quantities into symmetric and skew-symmetric components --- revealing that skew-symmetric quantities are akin to the imaginary part of complex numbers.

\begin{theorem}\label{prop:decomp}
    Every information-theoretic metric of interdependence $\phi$ can be decomposed into unique symmetric and skew-symmetric components as follows:
    \begin{equation}
    \phi = \frac{1}{2}\underbrace{\big( \phi  + \phi^* \big)}_\text{symmetric} + \frac{1}{2}\underbrace{\big( \phi  - \phi^* \big)}_\text{skew-symmetric}.
    \end{equation}
    Moreover, symmetric and skew-symmetric components are orthogonal under the inner product induced by $\langle u_k,u_j\rangle  = \delta_j^k$.
\end{theorem}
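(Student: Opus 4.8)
The plan is to dispatch the decomposition identity, the two symmetry-type claims, and uniqueness almost immediately from the fact that $^*$ is a linear involution, and to reserve the genuine work for orthogonality, which I would extract from the coefficient characterisation of Lemma~\ref{lemma:char}. The identity $\phi = \tfrac12(\phi+\phi^*) + \tfrac12(\phi-\phi^*)$ holds by inspection, so all that remains for the first part is to confirm the symmetry type of each summand.

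Writing $\psi_+ \coloneqq \tfrac12(\phi+\phi^*)$ and $\psi_- \coloneqq \tfrac12(\phi-\phi^*)$, I would apply $^*$ to each, using linearity and the involution property $(\phi^*)^*=\phi$ noted after Definition~\ref{def:conj}, to obtain $\psi_+^*=\psi_+$ and $\psi_-^*=-\psi_-$. By Definition~\ref{def:sym_skew} this makes $\psi_+$ symmetric and $\psi_-$ skew-symmetric. For uniqueness, I would assume any decomposition $\phi=\psi_++\psi_-$ into symmetric and skew-symmetric parts; applying $^*$ gives $\phi^*=\psi_+-\psi_-$, and solving this linear pair forces $\psi_\pm = \tfrac12(\phi\pm\phi^*)$, so the components are exactly those stated.

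The main step is orthogonality. Expanding $\psi_+=\sum_k a_k u_k$ and $\psi_-=\sum_k b_k u_k$ and using $\langle u_k,u_j\rangle=\delta_j^k$, the inner product collapses to $\langle \psi_+,\psi_-\rangle = \sum_{k=1}^{n-1} a_k b_k$. By Lemma~\ref{lemma:char}, symmetry of $\psi_+$ gives $a_k=a_{n-k}$ and skew-symmetry of $\psi_-$ gives $b_k=-b_{n-k}$. Re-indexing the sum via the involution $k\mapsto n-k$ on $\{1,\dots,n-1\}$ and substituting these relations yields $\sum_k a_k b_k = \sum_k a_{n-k}b_{n-k} = -\sum_k a_k b_k$, forcing the sum to vanish.

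I expect the only point requiring care to be this re-indexing step, and in particular the self-paired middle index when $n$ is even: at $k=n/2$ the skew-symmetry relation reads $b_{n/2}=-b_{n/2}$, hence $b_{n/2}=0$, so that term contributes nothing and the bijection argument goes through uniformly. Everything else is routine linear algebra once the involution property and Lemma~\ref{lemma:char} are in hand.
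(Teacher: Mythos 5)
Your proof is correct and follows essentially the same route as the paper's: the decomposition and the symmetry types of the two components via linearity and the involution property, uniqueness by applying $^*$ to an arbitrary symmetric/skew-symmetric decomposition and solving the resulting linear pair, and orthogonality via the coefficient characterisation of Lemma~\ref{lemma:char}. The only difference is one of detail: the paper states that orthogonality ``follows directly'' from Lemma~\ref{lemma:char}, whereas you carry out the re-indexing $k \mapsto n-k$ explicitly (including the observation that $b_{n/2}=0$ when $n$ is even), which fills in exactly the step the paper leaves implicit.
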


This result provides a guide to investigate the geometry of the $(n-1)$-dimensional space of high-order metrics $\phi$ satisfying labelling-symmetry and dependency, which we denote by $\mathcal{I}_n$. 

\begin{corollary}\label{cor:signs_coeff_conj}
    If $\bm X$ has $n$ variables, then the $(n-1)$ dimensions of $\mathcal{I}(\bm X)$ are divided in the following way: 
    \begin{equation}
    \text{dim}\big(\mathcal{I}_n\big) = 
    \underbrace{\lfloor n/2 \rfloor}_\text{symmetric} + \underbrace{\lfloor (n-1)/2 \rfloor}_\text{skew-symmetric}.
    \end{equation}
\end{corollary}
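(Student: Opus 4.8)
The plan is to realise the symmetric/skew-symmetric splitting of Theorem~\ref{prop:decomp} concretely on coefficient space and then count dimensions by a parity argument. First I would record that $\mathcal{I}_n$ is $(n-1)$-dimensional: by the cited decomposition every admissible $\phi$ is uniquely $\phi = \sum_{k=1}^{n-1} c_k u_k$, so the coordinates $(c_1,\dots,c_{n-1})$ identify $\mathcal{I}_n$ with $\mathbb{R}^{n-1}$. By Theorem~\ref{prop:decomp} this space is the orthogonal direct sum of the symmetric subspace $\mathcal{I}_n^+$ and the skew-symmetric subspace $\mathcal{I}_n^-$, so it suffices to compute $\dim\mathcal{I}_n^+$ and $\dim\mathcal{I}_n^-$ separately and to verify that they sum to $n-1$.

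Next I would translate the two subspaces into linear constraints via Lemma~\ref{lemma:char}. Using Prop.~\ref{prop:u_conj}, the conjugation acts in these coordinates as the index-reversal involution $R:(c_k)\mapsto(c_{n-k})$, whose $\pm 1$ eigenspaces are exactly $\mathcal{I}_n^\pm$. Thus $\mathcal{I}_n^+ = \{c : c_k = c_{n-k}\}$ and $\mathcal{I}_n^- = \{c : c_k = -c_{n-k}\}$, and counting free parameters reduces to counting the orbits of $R$ on the index set $\{1,\dots,n-1\}$.

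Then I would carry out the orbit count, splitting on the parity of $n$. The involution $k\mapsto n-k$ partitions $\{1,\dots,n-1\}$ into two-element orbits $\{k,n-k\}$, together with a single fixed point $k=n/2$ precisely when $n$ is even. Each two-element orbit contributes one free parameter to both $\mathcal{I}_n^+$ and $\mathcal{I}_n^-$; the fixed point contributes one free parameter to $\mathcal{I}_n^+$ (the constraint $c_{n/2}=c_{n/2}$ is vacuous) but forces $c_{n/2}=0$ in $\mathcal{I}_n^-$ (from $c_{n/2}=-c_{n/2}$).

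The main subtlety — really the only place the parity of $n$ enters — is the bookkeeping of this fixed point: for even $n$ it pushes one dimension from the skew-symmetric into the symmetric tally, which is exactly what produces the asymmetric floors $\lfloor n/2\rfloor$ versus $\lfloor (n-1)/2\rfloor$. To finish: for odd $n$ there are $(n-1)/2$ two-element orbits and no fixed point, giving $\dim\mathcal{I}_n^+=\dim\mathcal{I}_n^-=(n-1)/2=\lfloor n/2\rfloor=\lfloor (n-1)/2\rfloor$; for even $n$ there are $n/2-1$ two-element orbits plus the fixed point, giving $\dim\mathcal{I}_n^+=n/2=\lfloor n/2\rfloor$ and $\dim\mathcal{I}_n^-=n/2-1=\lfloor (n-1)/2\rfloor$. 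In both cases the two dimensions sum to $n-1=\dim\mathcal{I}_n$, confirming the stated split.
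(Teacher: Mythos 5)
Your proof is correct and follows exactly the route the paper intends (the paper leaves this corollary without an explicit proof, treating it as an immediate consequence of Theorem~\ref{prop:decomp} and Lemma~\ref{lemma:char}): identify the symmetric and skew-symmetric subspaces with the $\pm 1$ eigenspaces of the index-reversal $c_k \mapsto c_{n-k}$ and count orbits, with the fixed point $k=n/2$ for even $n$ accounting for the asymmetry between the two floors. Your parity bookkeeping matches the paper's stated dimensions in all the small cases ($n=2,\dots,5$) it lists.
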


\begin{figure*}
    \includegraphics[scale=1.65]{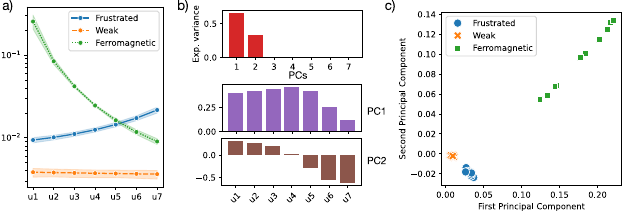}
    \caption{Information-theoretic analysis of the interdependencies observed in systems of $n=8$ spins subject to positive (ferromagnetic), negative (frustrated), and weak interactions between them. \textbf{a)} When calculating $u_k$, each type of interactions exhibit distinct profiles of interdependence. \textbf{b)} The variability among $u_k$ is captured by two principal components: one of symmetric character which accounts for the overall strength of the interdependence (PC1), and one of skew-symmetric character that accounts for the balance between high- and low-order interdependence (PC2). \textbf{c)} The values of $u_k$ projected onto these PCs provide a simple characterisation of these three types of systems in terms of their overall interdependence strength (PC1) and the balance between high- and low-order effects (PC2).}
    \label{fig1}
\end{figure*}

These results have the following consequences:
\begin{itemize}
    
    \item[-] For $n=2$ variables, Shannon's mutual information is the only symmetric functional (up to scaling), as there are no skew-symmetric functionals. 
       
    \item[-] For $n=3$ variables, the S-information is the only symmetric functional and the O-information (equivalently, the interaction information) is the only skew-symmetric one (up to scaling). Therefore, if $\phi\in\mathcal{I}_3$ then
        $\phi(\bm X) = \alpha \Sigma(\bm X) + \beta \Omega(\bm X)$.
    
    \item[-] For $n=4$ variables, the S-information and the interaction information span the subspace of symmetric metrics, while the O-information is the only skew-symmetric one (up to scaling). Therefore, if $\phi\in\mathcal{I}_4$ then
        $\phi(\bm X) = \alpha \Sigma(\bm X) + \alpha' \text{II}(\bm X) + \beta \Omega(\bm X)$.

    \item[-] For $n=5$ variables, the space of symmetric metrics is spanned by the S-information and the TSE-complexity, and the space of skew-symmetric metrics is spanned by the O-information and the interaction information. Therefore, if $\phi\in\mathcal{I}_5$ then
        $\phi(\bm X) = \alpha \Sigma(\bm X) + \alpha' \text{TSE}(\bm X) + \beta \Omega(\bm X) + \beta' \text{II}(\bm X)$.
\end{itemize}
Larger systems can be analysed in a similar fashion, but the existing metrics do not cover all the dimensions.

\vspace{.5cm}
\paragraph{Computational tractability.} 
Most $u_k$ (and, therefore, most high-order metrics) require estimating a large number of information-theoretic terms, and hence their computation becomes unfeasible when $n$ grows. 
Our next result characterises the space of possible computationally-efficient symmetric and skew-symmetric.
\begin{proposition}\label{prop:efficiency}
    The S-information and O-information are the only symmetric and skew-symmetric interdependence metrics that can be computed using a linear number of entropy terms. 
\end{proposition}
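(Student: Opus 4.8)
The plan is to argue directly in the entropy representation $\phi(\bm X)=\sum_{\bm a\subseteq I_n}\lambda_{\bm a}H(\bm X^{\bm a})$, which is unique because the subset entropies $\{H(\bm X^{\bm a})\}_{\bm a\neq\emptyset}$ are linearly independent as functionals on the space of distributions; hence the ``number of entropy terms'' is well defined as the number of nonzero $\lambda_{\bm a}$. First I would invoke labelling-symmetry to write $\lambda_{\bm a}=\mu_{|\bm a|}$, so that a single nonzero coefficient $\mu_k$ forces all $\binom{n}{k}$ subsets of size $k$ to appear. Since $\binom{n}{k}$ grows super-linearly in $n$ (at least quadratically, as $\binom{n}{k}\geq\binom{n}{2}$) for $2\le k\le n-2$, requiring only $O(n)$ terms forces $\mu_k=0$ throughout that range. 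Thus any admissible $\phi$ must be supported on sizes $1,n-1,n$, i.e. $\phi=\mu_1 A+\mu_{n-1}B+\mu_n C$ with $A=\sum_j H(X_j)$, $B=\sum_j H(\bm X^{-j})$ and $C=H(\bm X)$.

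Next I would bring in the two structural constraints. Evaluating $\phi$ on a product distribution, where $H(\bm X^{\bm a})=\sum_{i\in\bm a}H(X_i)$, and imposing the dependency property yields the single linear relation $\mu_1+(n-1)\mu_{n-1}+\mu_n=0$. For the conjugation, a direct application of Definition~\ref{def:conj} gives $A^*=B-nC$, $B^*=A-nC$ and $C^*=-C$, so that
\begin{equation}
\phi^*=\mu_{n-1}A+\mu_1 B-\bigl(n\mu_1+n\mu_{n-1}+\mu_n\bigr)C .
\end{equation}

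The remaining step is elementary linear algebra on the three unknowns. Matching coefficients in $\phi^*=\phi$ gives $\mu_{n-1}=\mu_1$ and $\mu_n=-n\mu_1$ (the dependency relation then holds automatically), so the symmetric solutions are exactly the multiples of $A+B-nC=\Sigma(\bm X)$. Matching $\phi^*=-\phi$ gives $\mu_{n-1}=-\mu_1$ and leaves $\mu_n$ free, whereupon the dependency relation fixes $\mu_n=(n-2)\mu_1$, so the skew-symmetric solutions are exactly the multiples of $A-B+(n-2)C=\Omega(\bm X)$. Together with Proposition~\ref{prop:decompositions} (which confirms $\Sigma$ and $\Omega$ are symmetric and skew-symmetric, respectively, and have $2n+1$ terms), this establishes that $\Sigma$ and $\Omega$ are the unique symmetric and skew-symmetric linear-complexity metrics up to scaling.

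I expect the only genuinely delicate point to be the opening reduction, namely converting the computational requirement into the support constraint $\mu_k=0$ for $2\le k\le n-2$. This rests on uniqueness of the entropy representation together with labelling-symmetry, which jointly rule out hidden cancellations smuggling in an efficient form of an interior term; once this is secured, everything collapses to solving a $3\times 3$ system. Equivalently, one could argue in the $u_k$ basis via Proposition~\ref{prop:decompositions} and Lemma~\ref{lemma:char}, but the entropy-basis formulation is what makes the notion of ``number of entropy terms'' transparent and is therefore the route I would take.
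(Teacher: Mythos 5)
Your proposal is correct, and it reaches the result by a genuinely different route than the paper. The paper's proof works in the $(u_k,r_k)$ coordinates: it shows via discrete calculus that $\phi=\sum_k c_k u_k$ involves only a linear number of entropy terms precisely when $c_k$ is affine in $k$, so that the linear-complexity metrics form a two-dimensional space spanned by DTC and $\Sigma$ (equivalently by $\Sigma$ and $\Omega$), and then reads off symmetry and skew-symmetry from Lemma~\ref{lemma:char}. You never pass through the $u_k$ basis at all: uniqueness of the entropy representation plus labelling-symmetry gives $\lambda_{\bm a}=\mu_{|\bm a|}$, the bound $\binom{n}{k}\ge\binom{n}{2}$ for $2\le k\le n-2$ eliminates all interior subset sizes, and conjugation plus dependency then reduce to a $3\times 3$ linear system whose symmetric and skew-symmetric solution lines are exactly the multiples of $\Sigma$ and of $\Omega$, respectively (your computations of the conjugates and of both solution sets check out). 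Both arguments rest on the same non-cancellation fact, which you phrase as linear independence of the subset entropies (true, since the entropy region has full dimension $2^n-1$) and the paper phrases as ``every entropy term appears in exactly one $r_k$''; neither proof establishes this in detail, so you are not less rigorous on that point. Your route is more elementary and self-contained --- no $u_k$ decomposition, no discrete calculus, no Lemma~\ref{lemma:char} --- and it makes explicit a detail the paper leaves implicit: $H(\bm X)$ itself is skew-symmetric under $*$, so the dependency axiom is genuinely needed to pin down $\Omega$, whereas in the symmetric case dependency holds automatically. What the paper's approach buys in exchange is a characterization of the \emph{entire} space of linear-complexity metrics (affine $c_k$), not just its symmetric and skew-symmetric lines, which is what underlies its remark that TC and DTC are also computationally efficient yet unbalanced. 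Your closing appeal to Proposition~\ref{prop:decompositions} is unnecessary, since your linear system already certifies the symmetry properties and term counts of $\Sigma$ and $\Omega$.
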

Note that the TC and DTC also require a linear number of terms, but they are neither symmetric or skew-symmetric --- in fact, their decomposition via Th.~\ref{prop:decomp}
yields $\text{TC}=(\Sigma+\Omega)/2$ and $\text{DTC}=(\Sigma-\Omega)/2$.

\vspace{.5cm}
\paragraph{Empirical results.}
To illustrate the applicability of this framework, we investigated the interdependencies exhibited by small spin systems under weak, ferromagnetic (positive), and frustrated (negative) types of interactions (App.~\ref{app:PCA}). The latter condition makes it impossible to simultaneously satisfy the tendency of all spins to be different from their neighbours, which is known for inducing high-order interdependencies~\cite{rosas2022disentangling}. 

To investigate the interdependencies of these systems, we calculated the values of $u_k$ according to Eq.~\eqref{eq:u_k} and identified the principal axes of variability by via principal component analysis (App.~\ref{app:PCA}). Results show that two components explain almost all the variability: a first component of symmetric character similar to the S-information, and a second component of skew-symmetric character similar to the O-information (Fig.~\ref{fig1}). In other words, an optimal information-theoretic analysis to characterise the high-order interdependence of these systems reduces to two keys aspects: (i) their strength, and (ii) the balance between high- and low-order components.

\vspace{.5cm}
\paragraph{Conclusion.} 
Here we investigated the space of possible metrics of high-order interdependence taking the form of linear combinations of Shannon entropies. 
We introduced the notion of \emph{entropic conjugation}, the effect of which can be understood in two complementary ways: as exchanging how metrics account for high- and low-order interdependencies, or alternatively, how they account for redundancies and synergies. Crucially, while multiple operationalisation of synergy and redundancy exist~\cite{wibral2017partial}, the properties of entropy conjugation hold for all approaches that are consistent with the PID formalism.

When studying high-oder quantities, non-negative metrics such as the S-information and TSE complexity were found to be invariant (i.e. symmetric) under entropic conjugation, confirming that they provide a balanced account of overall interdependence strength. 
Similarly, applying entropic conjugation to a signed metric such as the O-information results in a minus sing (i.e. skew-symmetric), guaranteeing that it provides a fair balance of the relative strength of redundancies and synergies. 
The interaction information was found to be either symmetric or skew-symmetric depending on the number of variables, providing a principled explanation to the observation (first made in Ref.~\cite{williams2010nonnegative}) that interpreting this quantity from a high-order perspective requires nuance.

This framework also let us prove that the well-known high-order metrics cover all possibilities when considering systems of up to $n=5$ variables, while the space of possible metrics for capturing interactions involving more variables remains largely unexplored. 
Additionally, the S-information and O-information were found to be the symmetric and skew-symmetric quantities that are most computationally efficient, and numerical analyses showed their relevance when studying physical systems.

\appendix

\section*{Appendix}

\section{Short proofs}

Here we present the proofs of our results. Some proofs use the quantities
\begin{equation}
    r_k(\bm X) = \frac{1}{{n\choose k}}\sum_{|\bm a|=k}
    H(\bm X^{\bm a}) \quad \text{for $k=0, 1, \ldots, n$,}
\end{equation}
with $r_0(\bm X)=0$. A direct calculation shows that
\begin{equation}
    u_k = 2r_k-r_{k+1}-r_{k-1}, \quad
    \text{for }
    k=1, 2, \ldots, n-1.
    \label{app:shortproofs:eq:ukrk}
\end{equation}
Moreover, we use the definition $\Delta^2 r_k := r_{k+1} + r_{k-1} - 2 r_k$ and the fact that $u_k = -\Delta^2 r_k$.

\begin{proof}[Proof of Prop.~\ref{prop:u_conj}]
Using the representation~\eqref{app:shortproofs:eq:ukrk}, a direct calculation shows that
\begin{align}
    (u_k)^* 
    &= (2r_k)^* - (r_{k+1})^* -(r_{k-1})^* \nonumber\\
    &= 2r_{n-k} - r_{n-k+1} -r_{n-k-1} \nonumber
    = u_{n-k}.
\end{align}
\end{proof}

\begin{proof}[Proof of Lemma~\ref{lemma:char}]
The proof follows directly from Definitions~\ref{def:conj} and \ref{def:sym_skew}.
\end{proof}

\begin{proof}[Proof of Prop.~\ref{prop:decompositions}]

The expressions for each metric can be directly verified by leveraging the representation~\eqref{app:shortproofs:eq:ukrk} and using the definition of $r_k$. The second part of the proposition follows then by using Lemma~\ref{lemma:char}.
\end{proof}

\begin{proof}[Proof of Th.~\ref{prop:decomp}]
    Let's denote as $S=(\phi + \phi^*)/2$ and $T=(\phi - \phi^*)/2$ the components proposed in the proposition, which can be directly shown to be symmetric and skew-symmetric and satisfying $\phi=S+T$. Let's assume there is another decomposition $\phi=S'+T'$ where $(S')^*=S'$ and $(T')^*=-T^*$. However this would imply that $\phi+\phi^* = 2S'$ and $\phi-\phi^*=-2T'$, which leads to $S=S'$ and $T=T'$, showing that the decomposition is unique. The orthogonality of these subspaces follows directly from Lemma~\ref{lemma:char}.
\end{proof}

\begin{proof}[Proof of Prop.~\ref{prop:efficiency}]
Consider $\phi=\sum_{k=1}^{n-1} c_k u_k$. Using the representation~\eqref{app:shortproofs:eq:ukrk}, one can re-write $\phi$ in terms of $r_k.$ Since every entropy term appears only in exactly one of the $r_k,$ there cannot be any term cancellations. Therefore, $\phi$ involves a linear number of (unconditional) entropy terms if and only if the resulting coefficients are non-zero only for $r_1$, $r_{n-1}$, and $r_n$.

Let us first consider the case in which $c_k=\alpha k + \beta$ is linear on $k$. Then, one can show that
\begin{align}
    \phi
    = - \sum_{k=1}^{n-1} c_k \Delta^2 r_k
    = -\alpha \sum_{k=1}^{n-1} k \Delta^2 r_k - \beta \sum_{k=1}^{n-1} \Delta^2 r_k.
\end{align}
By using the fact that $\Delta^2 r_k = \Delta r_{k+1} - \Delta r_{k}$ where $\Delta r_k := r_{k}-r_{k-1}$, one can use discrete calculus to find that
\begin{align}
    \sum_{k=1}^{n-1} \Delta^2 r_k
    &= \Delta r_n - \Delta r_1
    = r_n - r_{n-1} - r_1,\\
    \sum_{k=1}^{n-1} k \Delta^2 r_k 
    &=
    (n-1)\Delta r_{n}- \sum_{k=1}^{n-1} \Delta r_k^n\nonumber\\
    &= (n-1) r_n - n r_{n-1}.
\end{align}
This calculation leads to $\phi = \beta r^n_1 + (\alpha n+\beta)r_{n-1}^n - (\alpha(n-1)+ \beta) r_n^n$, showing that $\phi$ only includes a linear number of entropies.

To prove the converse statement, let's consider a quantity $\phi=\sum_k c_k u_k$, and let's denote its coefficients under $r_k$ and $\Delta r_k$ as $a_k$ and $b_k$ respectively, so that $\phi=\sum_i a_i r_i = \sum_j b_j \Delta r_j$ hold. As the $r_k$ are linearly independent, one can see that the above equations imply that the following conditions hold for all $k=1,\ldots,n-2$: 
\begin{equation}
a_k=b_k-b_{k+1}
\quad\text{and}\quad
b_j=c_j-c_{j+1}.
\end{equation}
Now, note that $a_k=0$ with $k\in \{2,\dots,n-2\}$ requires that $b_k=b_{k+1}$, and hence the above condition forces $b_1=\ldots=b_{n-1}$. Then, applying the same reasoning shows that $c_k-c_{k+1}$ has to be constant, which proves that $c_k$ depends linearly on $k$.

The above shows that the space of metrics that can be computed with a linear number of terms is two-dimensional, being spanned by DTC ($\alpha=1,\beta=0$) and the S-information ($\alpha=0,\beta=n$). This space is also spanned by the S-information and the O-information, concluding the proof.

\end{proof}

\vspace{-0.5cm}

\section{PID conjugation}
\label{app:conjugation_PID}

According to Eq.~\eqref{eq:PID}, PID introduces a decomposition of the mutual information $I(\bm X;Y)$ in terms of information atoms of the form $I_\partial^{\bm\alpha}(\bm X;Y)$, where $\bm\alpha=\{\alpha_1,\ldots,\alpha_l\}$ with $\alpha_j\subseteq I_n$ being sets of indices of the source variables $X_1,\dots,X_n$ such that no $\alpha_j$ contains another $\alpha_k$ --- making $\bm\alpha$ an `anti-chain' of sets of sources. Conceptually, $I_\partial^{\bm\alpha}(\bm X;Y)$ quantifies the information about the target variable $Y$ that is accessible via each collection of source variables $\alpha_1,\dots,\alpha_l$, while not being accessible via subsets of those collections or any other collections that not include them. For example, if $n=2$ then $I_\partial^{\{\{1,2\}\}}(\bm X;Y)$ corresponds to the information accessible in $(X_1,X_2)$ but not accessible from either $X_1$ or $X_2$ by themselves.

The accessibility relations denoted by PID antichains can be made explicit by `re-representing' them in terms of Boolean functions $f:\mathcal{P}_n \mapsto \{0,1\}$, where $\mathcal{P}_n$ is the powerset of $\{1, \dots, n\}$, taking a set of source indices as an input and returning $0$ or $1$ depending on whether the associated atom $I_\partial^f(\bm X;Y)$ is or isn't accessible via the set of sources. 
For example, the atom $\bm\alpha=\{\{1,2\}\}$ corresponds to the Boolean function that gives $f(\emptyset)=f(\{1\})=f(\{2\})=0$ and $f(\{1,2\})=1$. 
Crucially, it has been shown that there is a natural isomorphism between PID antichains and monotonic Boolean functions~\cite{gutknecht2021bits, gutknecht2023babel}, which implies that PID can be re-defined as follows.
\begin{definition}
A PID of the information provided by $\bm X= (X_1,\ldots, X_n)$ about $Y$ is a set of quantities $I_\partial^f(\bm X;Y)$ that satisfy for all $\bm a \subseteq \{1, \dots, n\}$
\begin{equation}
I(\bm X^{\bm a};Y) = \sum_{\substack{f\in\mathcal{B}_n \\ f(\bm a) = 1}} I_\partial^f(\bm X;Y),\label{eq:PID2}
\end{equation}
where $\mathcal{B}_n$ is the set of all non-constant monotonic Boolean functions $f:\mathcal{P}_n \mapsto \{0,1\}$.
\end{definition}

Note that Eq.~\eqref{eq:PID2} is equivalent to Eq.~\eqref{eq:PID}, with the only difference being the way in which PID atoms are labeled (either as antichains or Boolean functions). 
That said, viewing PID in terms of Boolean functions lets us conveniently handle various expressions, as shown below.

\begin{lemma}\label{prop:cmi_atoms_boolean}
Given two disjoint sets of source variables $\bm a$ and $\bm b$, we have
\begin{align}\label{def:PID_boolean}
I(\bm X^{\bm a};Y|\bm X^{\bm b}) 
&= \sum_{f\in\mathcal{B}_{\bm a}^{\bm b}}
I_\partial^f(\bm X;Y),
\end{align}
where $\mathcal{B}_{\bm a}^{\bm b} = \{ f\in\mathcal{B}_n: 
f(\bm a\cup\bm b)=1, f(\bm b)=0\}$. 
\end{lemma}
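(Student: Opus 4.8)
The plan is to reduce the conditional mutual information to a difference of two ordinary mutual informations via the chain rule, and then apply the Boolean-function form of PID (Eq.~\eqref{eq:PID2}) to each term separately. Since $\bm a$ and $\bm b$ are disjoint, $\bm X^{\bm a\cup\bm b}=(\bm X^{\bm a},\bm X^{\bm b})$, and the chain rule gives $I(\bm X^{\bm a};Y|\bm X^{\bm b}) = I(\bm X^{\bm a\cup\bm b};Y) - I(\bm X^{\bm b};Y)$. This turns the whole problem into bookkeeping over which atoms survive the subtraction.

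First I would expand each right-hand term using Eq.~\eqref{eq:PID2}, writing $I(\bm X^{\bm a\cup\bm b};Y)=\sum_{f\in\mathcal{B}_n:\,f(\bm a\cup\bm b)=1} I_\partial^f(\bm X;Y)$ and $I(\bm X^{\bm b};Y)=\sum_{f\in\mathcal{B}_n:\,f(\bm b)=1} I_\partial^f(\bm X;Y)$. The goal is then to show that subtracting these leaves exactly the atoms with $f(\bm a\cup\bm b)=1$ and $f(\bm b)=0$, i.e. precisely the index set $\mathcal{B}_{\bm a}^{\bm b}$.

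The key step --- and essentially the only place any structure is used --- is monotonicity of the functions in $\mathcal{B}_n$. Since $\bm b\subseteq\bm a\cup\bm b$, monotonicity gives $f(\bm b)=1\Rightarrow f(\bm a\cup\bm b)=1$, so $\{f:f(\bm b)=1\}\subseteq\{f:f(\bm a\cup\bm b)=1\}$. Consequently the subtraction is a genuine set difference rather than a signed combination: every atom with $f(\bm b)=1$ appears with coefficient $+1$ in the first sum and $-1$ in the second and cancels, while every atom with $f(\bm a\cup\bm b)=1$ but $f(\bm b)=0$ survives with coefficient $+1$. This leaves $\sum_{f:\,f(\bm a\cup\bm b)=1,\,f(\bm b)=0} I_\partial^f(\bm X;Y)=\sum_{f\in\mathcal{B}_{\bm a}^{\bm b}} I_\partial^f(\bm X;Y)$, as claimed. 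I expect this containment to be the main (and essentially only) obstacle: once it is established the identity is immediate, whereas without monotonicity the two index sets would not nest and the clean single-sum form would fail.
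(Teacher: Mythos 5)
Your proposal is correct and follows essentially the same route as the paper's proof: apply the chain rule $I(\bm X^{\bm a};Y|\bm X^{\bm b}) = I(\bm X^{\bm a\cup\bm b};Y) - I(\bm X^{\bm b};Y)$ and expand both terms via Eq.~\eqref{eq:PID2}. The only difference is that you explicitly justify, via monotonicity of $f\in\mathcal{B}_n$, the nesting $\{f: f(\bm b)=1\}\subseteq\{f: f(\bm a\cup\bm b)=1\}$ that makes the subtraction a clean set difference --- a step the paper compresses into ``from which the desired result follows.''
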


Note that the set $\mathcal{A}_{\bm a}^{\bm b}$ used in Th.~\ref{prop:PID} corresponds to the same atoms in $\mathcal{B}_{\bm a}^{\bm b}$ but represented in antichain form instead of as Boolean functions.

\begin{proof}
\begin{align}
I(\bm X^{\bm a};Y|\bm X^{\bm b}) &= I(\bm X^{\bm a},\bm X^{\bm b};Y) - I(\bm X^{\bm b};Y) \nonumber\\
&=\!\!\!\!\sum_{\substack{f\in\mathcal{B}_n\\
f(\bm a \cup \bm b)=1}} \!\!\!\!\! I_\partial^f(\bm X;Y) - \!\sum_{\substack{f\in\mathcal{B}_n\\
f(\bm b)=1}}\!\!I_\partial^f(\bm X;Y),
\end{align}
from which the desired result follows.
\end{proof}

The information atoms have a natural order in terms of their accessibility: an atom $I_\partial^f(\bm X;Y)$ can be said to be `more accessible' than an atom $I_\partial^g(\bm X;Y)$ if any set via which the latter is accessible is also a set via which the former is accessible. This property is elegantly captured by the Boolean function representation of atoms via the following partial ordering:
\begin{equation}
f \sqsubseteq g \text{ if and only if } 
f(\bm a) \leq g(\bm a)
\,\,\forall \bm a\in\mathcal{P}_n.
\end{equation}
Note that this partial ordering gives rise to a lattice of PID atoms denoted by $(\mathcal{B}_n,\sqsubseteq)$, which is isomorphic to the original PID lattice of antichains~\cite{gutknecht2023babel}.

Let us now introduce the notion of PID conjugation. For this, let us first note that, according to the order-theoretic principle of duality~\cite{davey2002introduction}, every lattice has a dual lattice in which all arrows are reverted. If we think of Boolean functions as bitstrings (with subsets ordered lexicographically), the dual of the PID lattice can be found by simply inverting the digits and reading the bitstring backwards, as shown by our next result.
\begin{proposition} 
\label{prop:pid_inv}
The mapping $f \mapsto f^\dagger$, where $f^\dagger$ is the Boolean function satisfying
\begin{equation}
\forall \bm a \subseteq \{1\ldots,n\}: f^\dagger(\bm a)=1 \Leftrightarrow f(\bm a^C)=0,
\end{equation}
is an order-reversing involution on $(\mathcal{B}_n,\sqsubseteq)$.
\end{proposition}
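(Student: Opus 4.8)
The plan is to rewrite the defining condition in the equivalent algebraic form $f^\dagger(\bm a) = 1 - f(\bm a^C)$, and then verify three things in turn: that $f^\dagger$ genuinely lands in $\mathcal{B}_n$, that $\dagger$ is an involution, and that it reverses the order $\sqsubseteq$. The cleanest conceptual framing is to observe that $\dagger$ is the composition $f \mapsto \neg \circ f \circ c$, where $c(\bm a) = \bm a^C$ is complementation on $\mathcal{P}_n$ and $\neg(x) = 1 - x$ is negation on $\{0,1\}$. Both $c$ and $\neg$ are themselves order-reversing involutions (on the Boolean lattice $\mathcal{P}_n$ and on the chain $\{0,1\}$ respectively), and essentially every claim in the proposition descends from this single observation.

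First I would check well-definedness, which is the \emph{only} step carrying any real content. Monotonicity of $f^\dagger$ follows because $\bm a \subseteq \bm b$ implies $\bm b^C \subseteq \bm a^C$, so monotonicity of $f$ gives $f(\bm b^C) \le f(\bm a^C)$, and applying the decreasing map $\neg$ yields $f^\dagger(\bm a) \le f^\dagger(\bm b)$. Non-constancy follows because $c$ is a bijection of $\mathcal{P}_n$: if $f$ takes both values $0$ and $1$, then so does $f \circ c$, hence so does $f^\dagger = \neg \circ f \circ c$. Together these show $f^\dagger \in \mathcal{B}_n$.

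The remaining two properties are then immediate one-line computations. For the involution property, $(f^\dagger)^\dagger(\bm a) = 1 - f^\dagger(\bm a^C) = 1 - (1 - f((\bm a^C)^C)) = f(\bm a)$, using $\neg \circ \neg = \mathrm{id}$ and $c \circ c = \mathrm{id}$. For order-reversal, I would assume $f \sqsubseteq g$, i.e. $f(\bm a) \le g(\bm a)$ for all $\bm a$; evaluating this at $\bm a^C$ gives $f(\bm a^C) \le g(\bm a^C)$, and applying $\neg$ reverses the inequality to $g^\dagger(\bm a) \le f^\dagger(\bm a)$ for every $\bm a$, which is exactly $g^\dagger \sqsubseteq f^\dagger$.

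I do not anticipate a genuine obstacle: the hypotheses defining $\mathcal{B}_n$ (monotonicity and non-constancy) are used only in the well-definedness step, and even there the argument is a two-line consequence of complementation reversing set inclusion. The main thing to keep straight is the bookkeeping in the order-reversal argument — namely that precomposition with the bijection $c$ preserves the pointwise order (it merely relabels arguments by $\bm a \mapsto \bm a^C$) while postcomposition with $\neg$ reverses it, so that their combined effect is an order-reversing involution rather than an order-preserving one.
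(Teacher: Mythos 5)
Your proof is correct, and its core computations (involution via $c\circ c=\mathrm{id}$ and $\neg\circ\neg=\mathrm{id}$; order reversal by evaluating at $\bm a^C$ and negating) coincide with the paper's, though you package them more cleanly as the observation that $\dagger = \neg \circ f \circ c$ is a composition of two order-reversing involutions. Two substantive differences are worth noting. First, you verify well-definedness --- that $f^\dagger$ is again monotone and non-constant, hence actually lies in $\mathcal{B}_n$ --- which the paper's proof omits entirely; since the proposition asserts that $\dagger$ is an involution \emph{on} $(\mathcal{B}_n,\sqsubseteq)$, this is a genuine (if easy) gap in the published argument that your proposal fills, and you are right that it is the only step where monotonicity and non-constancy of $f$ are used. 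Second, you prove only the forward implication $f \sqsubseteq g \Rightarrow g^\dagger \sqsubseteq f^\dagger$, whereas the paper proves both directions; your version suffices, because the converse follows by applying the forward implication to $g^\dagger \sqsubseteq f^\dagger$ and invoking the involution property. One further point in your favor: the paper's proof glosses $f \sqsubseteq g$ as ``$g(\bm a)=1 \rightarrow f(\bm a)=1$,'' which is the reverse of the pointwise definition $f(\bm a) \leq g(\bm a)$ given in the text (the proof is internally consistent with that reversed reading, so the result stands, but the notation clashes); your argument uses the stated definition consistently throughout.
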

\begin{proof}
Involution: $f^{\dagger \dagger}(\bm a)=1$ if and only if $f((\bm a^C)^C)=f(\bm a)=1$. Hence, $f^{\dagger \dagger} = f$. Order-reversing: Suppose first that $f \sqsubseteq g$, meaning that $g(\bm a) = 1 \rightarrow f(\bm a) = 1$. Then, if $f^\dagger(\bm a) = 1$ we must have $f(\bm a^c)=0$ and so $g(\bm a^c) = 0$ and hence $g^\dagger(\bm a) = 1$.  Therefore we have $g^\dagger \sqsubseteq f^\dagger$. Conversely, if $g^\dagger \sqsubseteq f^\dagger$, then $f^\dagger(\bm a)=1 \rightarrow g^\dagger(\bm a)=1$. Hence, if $g(\bm a)=1$ it follows that $g^\dagger(\bm a^c)=0$ and thus $f^\dagger(\bm a^c)=0$ and thus $f(\bm a)=1$. Therefore,  $f\sqsubseteq g$.
\end{proof}

The effect of $\dag$ on the PID lattice can be understood as follows. Following Ref.~\cite{Jansma2024fast}, each atom can be expressed as concatenations of meet and join operations corresponding to redundancies and synergies. For example, the atom $\bm\alpha=\{\{1,2\},\{1,3\}\}$ can be constructed as $(1\vee 2)\wedge(2\vee3)$, where the join operation ($\vee$) can be thought of as denoting the union between sources (i.e., synergy), and the meet operation ($\wedge$) as the intersection between them (i.e., redundancy). Then, the involution introduced in Proposition~\ref{prop:pid_inv} can be understood as switching meets for joins and vice-versa. For example
\begin{align*}
    \{\{1,2\},\{1,3\}\}^\dagger 
    &= ((1\vee 2)\wedge(2\vee3))^\dagger \\
    &= (1\wedge2)\vee(1\wedge3) \\
    &\overset{(a)}{=} 1\vee(2\wedge3) = \{\{1\},\{2,3\}\},
\end{align*}
where equality $(a)$ uses the distributivity between meets and joins~\cite{Jansma2024fast}. 
This shows that the natural PID involution switches redundancy (i.e. easily accessible information, since it is contained in multiple sources) and synergy (i.e. difficult to access information, since one needs to observe multiple sources). Thus, the more easily an atom can be accessed, the more difficult it is to access its conjugate.

This PID involution leads to a natural conjugation between PID atoms, which we define next.
\begin{definition}\label{def:pid_conjugation}
The conjugate of a PID atom is given by
\begin{equation}
    \big(I_\partial^f(\bm X;Y) \big)^\dagger := I_\partial^{f^\dagger}(\bm X;Y),
\end{equation}
where $f^\dag$ is as defined in Prop.~\ref{prop:pid_inv}. 
Additionally, the conjugate of a linear combination of PID atoms $\psi(\bm X;Y) = \sum_{f \in \mathcal{B}_n} c_f I_\partial^f(\bm X;Y)$ is defined as
\begin{equation}
    \big(\psi(\bm X;Y) \big)^\dagger := 
    \sum_{f \in \mathcal{B}_n} c_f I_\partial^{f^\dagger}(\bm X;Y).
\end{equation}
\end{definition}

We now prove Proposition~\ref{prop:PID}, which states that applying PID conjugation on a conditional mutual information $I(\bm X^{\bm a};Y|\bm X^{\bm b})$ leads to the same outcome as entropic conjugation does --- namely, $I(\bm X^{\bm a};Y|\bm X^{(\bm a \cup \bm b)^C})$, where the complement is taken within the set of source variables.

\begin{proof}[Proof of Th.~\ref{prop:PID}] 
For $\bm a,\bm b$ disjoint subsets of $I_n$, then
\begin{align*}
I(\bm X^{\bm a};Y|\bm X^{\bm b})^\dagger &\overset{\text{lemma~\ref{prop:cmi_atoms_boolean}}}{=}  \left(\sum_{f(\mathbf{a}\cup \mathbf{b}) = 1 \,\&\, f(\mathbf{b})=0} \!\!\! I_\partial^f(\bm X;Y) \right)^\dagger \\
&\overset{\text{def. \ref{def:pid_conjugation}}}{=} \sum_{f(\mathbf{a}\cup \mathbf{b}) = 1 \,\&\, f(\mathbf{b})=0} I_\partial^{f^\dagger}(\bm X;Y) \\
&\overset{\text{def. $f^\dagger$}}{=} \sum_{f((\mathbf{a}\cup \mathbf{b})^C) = 0 \,\&\, f(\mathbf{b}^C)=1} I_\partial^f(\bm X;Y) \\
&= \sum_{f(\mathbf{a}\cup (\mathbf{a} \cup \mathbf{b})^C)=1 \,\&\, f((\mathbf{a}\cup \mathbf{b})^C) = 0} I_\partial^f(\bm X;Y) \\
&\overset{\text{lemma~\ref{prop:cmi_atoms_boolean}}}{=} I(\bm X^{\bm a};Y|\bm X^{(\bm a \cup \bm b)^C}),
\end{align*}
where the second to last equality follows because $\mathbf{a}\cup (\mathbf{a} \cup \bm b)^C = \bm b^C$ if $\bm a$ and $\bm b$ are disjoint. 
\end{proof}

Please note that the proof uses no properties specific to particular instantiations of synergy or redundancy, and hence the result holds for any operationalisation of these quantities that are consistent with the PID framework.

\section{Analysis of spin systems}
\label{app:PCA}

Our experiments considered systems of $n$ spins $\bm X=(X_1,\dots,X_n) \in \{-1,1\}^n$ following a Boltzmann distribution $p_{\bm X}(\bm x) = e^{-\beta H(\bm x)}/Z$ with a Hamiltonean of the form
\begin{equation}
    H(\bm x) = \frac{2}{n(n-1)}\sum_{i=1}^n\sum_{j=i+1}^n x_i x_k J_{i,k}~,
\end{equation}
where the coupling coefficients $J_{i,k}$ are i.i.d. sampled from a Gaussian distribution with mean $\mu$ and variance $\sigma^2$. The results reported in Fig.~\ref{fig1} corresponds to systems of $n=8$ spins with $\beta=1$, $\sigma^2=2$, and either $\mu=5$ (ferromagnetic), $\mu=0$ (weak), or $\mu=-5$ (frustrated). 

Our analysis pipeline was structured as follows. 
We first computed the joint distribution of $10$ systems of each type, and calculated the value of $u_k$ for each of them. 
The resulting values were then used to perform a principal component analysis. 
From this, we obtained the loadings of the two first principal components, denoted as $\xi_k$ and $\nu_j$, respectively. These loadings were then used to construct two high-order metrics, $\phi_{\text{PC}1}(\bm X) := \sum_{k=1}^{n-1} \xi_k u_k (\bm X)$ and $\phi_{\text{PC}2}(\bm X) := \sum_{j=1}^{n-1} \nu_j u_j (\bm X)$, which corresponds to projecting the value of the $u_k$'s onto the directions given by the principal components. Finally, these two resulting metrics were used to characterise the systems of interest.

\bibliography{references}
\bibliographystyle{apsrev4-1}

\end{document}